\newcommand{\pref}{\mathcal{P}\xspace}
\newcommand{\ASHG}{ASHG\xspace}%Enemy Aversion
\newcommand{\ASHGS}{ASHGs\xspace}%Enemy Aversion
\newcommand\eat[1]{}
\title{Stable partitions in additively separable hedonic games\footnote{This material is based on work supported by the Deutsche Forschungsgemeinschaft under grants BR-2312/6-1 (within the European Science Foundation's EUROCORES program LogICCC) and BR~2312/7-1.}}
\author{Haris Aziz, Felix Brandt, and Hans Georg Seedig}
\begin{document}

	\begin{abstract}
	An important aspect in systems of multiple autonomous agents is the exploitation of synergies via coalition formation.
	In this paper, we solve various open problems concerning the computational complexity of stable partitions in additively separable hedonic games. 
	First, we propose a polynomial-time algorithm to compute a contractually individually stable partition. This contrasts with previous results such as the NP-hardness of computing individually stable or Nash stable partitions. 
	Secondly, we prove that checking whether the core or the strict core exists is NP-hard in the strong sense even if the preferences of the players are symmetric. Finally, it is shown that verifying whether a partition consisting of the grand coalition is contractually strict core stable or Pareto optimal is coNP-complete.
	\end{abstract}

	\section{Introduction}

	Ever since the publication of \citeauthor{vNM47a}'s \emph{Theory of Games and Economic Behavior} in 1944, coalitions have played a central role within game theory. The crucial questions in coalitional game theory are
	which coalitions can be expected to form and how the members of coalitions should divide the proceeds of their cooperation. Traditionally the focus has been on the latter issue, which led to the formulation and analysis of concepts such as the core, the Shapley value, or the bargaining set.
	Which coalitions are likely to form is commonly assumed to be settled exogenously, either by explicitly specifying the coalition structure, a partition of the players in disjoint coalitions, or, implicitly, by assuming that larger coalitions can invariably guarantee better outcomes to its members than smaller ones and that, as a consequence, the grand coalition of all players will eventually form. 
	The two questions, however, are clearly interdependent: the individual players' payoffs depend on the coalitions that form just as much as the formation of coalitions depends on how the payoffs are distributed.

	%The crucial questions in coalitional game theory are
	%which coalitions can be expected to form and how the members of coalitions should divide the proceeds of their cooperation. Traditionally the focus has been on the latter issue, which led to the formulation and analysis of concepts such as the core, the Shapley value, or the bargaining set.
	%Which coalitions are likely to form is commonly assumed to be settled exogenously, either by explicitly specifying the coalition structure, a partition of the players in disjoint coalitions, or, implicitly, by assuming that larger coalitions can invariably guarantee better outcomes to its members than smaller ones and that, as a consequence, the grand coalition of all players will eventually form.
	\emph{Coalition formation games}, as introduced by \citet{DrGr80a}, provide a simple but versatile formal model that allows one to focus on coalition formation. In many situations it is natural to assume that a player's appreciation of a coalition structure only depends on the coalition he is a member of and not on how the remaining players are grouped.
	Initiated by  \citet{BKS01a} and \citet{BoJa02a}, much of the work on coalition formation now concentrates on these so-called \emph{hedonic games}. Hedonic games are relevant in modeling many settings such as formation of groups, clubs and societies~\citep{BoJa02a} and also online social networking~\citep{ElkiW09a}.

	The main focus in hedonic games has been on notions of stability for coalition structures such as \emph{Nash stability}, \emph{individual stability}, \emph{contractual individual stability}, or \emph{core stability} and characterizing conditions under which the set of stable partitions is guaranteed to be non-empty \citep[see, \eg][]{BoJa02a,BuZw03a}. \citet{SuDi07b} presented a taxonomy of stability concepts which includes the \emph{contractual strict core}, the most general stability concept that is guaranteed to exist. 
	A well-studied special case of hedonic games are two-sided matching games in which only coalitions of size two are admissible~\citep{RoSo90a}. 
	We refer to~\citet{Hajd06a} for a critical overview of hedonic games.

	Hedonic games have recently been examined from an algorithmic perspective~\citep[see, \eg][]{Ball04a,DBHS06a}. \citet{Cech08a} surveyed the algorithmic problems related to stable partitions in hedonic games in various representations. 
	\citet{Ball04a} showed that for hedonic games represented by \emph{individually rational list of coalitions}, the complexity of checking whether core stable, Nash stable, or individual stable partitions exist is NP-complete. He also proved that every hedonic game admits a contractually individually stable partition. 
	Coalition formation games have also received attention in the artificial intelligence community where the focus has generally been on computing optimal partitions for general coalition formation games~\cite{SLA+99a} without any combinatorial structure. 
	\citet{ElkiW09a} proposed a fully-expressive model to represent hedonic games which encapsulates well-known representations such as \emph{individually rational list of coalitions} and \emph{additive separability}.

	\emph{Additively separable hedonic games (ASHGs)} constitute a particularly natural and succinctly representable class of hedonic games. Each player in an ASHG has a value for any other player and the value of a coalition to a particular player is simply the sum of the values he assigns to the members of his coalition. Additive separability satisfies a number of desirable axiomatic properties~\citep{BBP04a} and \ASHGS are the non-transferable utility generalization of \emph{graph games} studied by \citet{DePa94a}. 
	\citet{Olsen09a} showed that checking whether a nontrivial Nash stable partition exists in an \ASHG is NP-complete if preferences are nonnegative and symmetric. This result was improved by \citet{SuDi10a} who showed that checking whether a core stable, strict core stable, Nash stable, or individually stable partition exists in a general \ASHG is NP-hard.
	%\citet{SuDi10a} showed that checking whether a core stable, strict core stable, Nash stable, or individually stable partition exists in an \ASHG is NP-hard. Similar results concerning Nash stability were presented by \citet{Olsen09a}. 
	\citet{DBHS06a} obtained positive algorithmic results for subclasses of \ASHGS in which each player merely divides other players into friends and enemies. \citet{BrLa09a} examined the tradeoff between stability and social welfare in \ASHGS.
	Recently, \citet{GaSa10a} showed that computing partitions that satisfy some variants of individual-based stability is PLS-complete, even for very restricted preferences. 
	In another paper, 
	\citet{ABS10a} studied the complexity of computing and verifying optimal partitions in \ASHGS.

	%\paragraph{Contributions:} 
	In this paper, we settle the complexity of key problems regarding stable partitions of \ASHGS. 
	We present a polynomial-time algorithm to compute a contractually individually stable partition. This is the first positive algorithmic result (with respect to one of the standard stability concepts put forward by \citet{BoJa02a}) for general \ASHGS with no restrictions on the preferences. We strengthen recent results of \citet{SuDi10a} and prove that checking whether the core or the strict core exists is NP-hard, even if the preferences of the players are symmetric. Finally, it is shown that verifying whether a partition is in the contractually strict core (CSC) is coNP-complete, even if the partition under question consists of the grand coalition. This is the first computational hardness result concerning CSC stability in hedonic games of any representation. The proof can be used to show that verifying whether the partition consisting of the grand coalition is Pareto optimal is coNP-complete, thereby answering a question mentioned by \citet{ABS10a}. Our computational hardness results imply computational hardness of the equivalent questions for \emph{hedonic coalition nets} \citep{ElkiW09a}.

	\section{Preliminaries}
	\label{sec:pre}

	In this section, we provide the terminology and notation required for our results.

	A \emph{hedonic coalition formation game} is a pair $(N,\pref)$ where $N$ is a set of players and $\pref$ is a \emph{preference profile} which specifies for each player $i\in N$ the preference relation $ \succsim_i$, a reflexive, complete, and transitive binary relation on the set $\mathcal{N}_i=\{S\subseteq N \mid i\in S\}$.
	The statement $S\succ_iT$ denotes that $i$ strictly prefers $S$ over $T$ whereas $S\sim_iT$ means that $i$ is indifferent between coalitions $S$
	 and $T$. A \emph{partition} $\pi$ is a partition of players $N$ into disjoint coalitions. By $\pi(i)$, we denote the coalition of $\pi$ that includes player $i$.

	We consider utility-based models rather than purely ordinal models. 
	In \emph{additively separable preferences}, a player $i$ gets value $v_i(j)$ for player $j$ being in the same coalition as $i$ and if $i$ is in coalition $S\in \mathcal{N}_i$, then $i$ gets utility $\sum_{j\in S\setminus \{i\}}v_i(j)$.
	A game $(N,\pref)$ is \emph{additively separable} if for each player $i\in N$, there is a utility function $v_i: N\rightarrow \mathbb{R}$ such that $v_i(i)=0$ and for coalitions $S,T\in\mathcal{N}_i$, $S \succsim_i T$ if and only if $\sum_{j\in S}v_i(j) \geq \sum_{j\in T}v_i(j)$.  We will denote the utility of player $i$ in partition $\pi$ by $u_{\pi}(i)$.

	A preference profile is \emph{symmetric} if $v_i(j)=v_j(i)$ for any two players $i,j\in N$ and is \emph{strict} if $v_i(j)\neq 0$ for all $i,j\in N$. 
	For any player $i$, let $F(i,A)=\{j\in A \mid v_i(j)> 0\}$ be the set of friends of player $i$ within $A$.

	We now define important stability concepts used in the context of coalition formation games.

	\renewcommand*{\labelitemi}{$\bullet$}

	\begin{itemize}
	\item A partition is  \emph{Nash stable (NS)} if no player can benefit by 
	moving from his coalition $S$ to another (possibly empty) coalition $T$.
	\item A partition is \emph{individually stable (IS)} if no player can
	benefit by moving from his coalition $S$ to another existing (possibly empty) coalition $T$  while not making the members of $T$ worse off.
	\item A partition is  \emph{contractually individually stable (CIS)} if no
	player can benefit by moving from his coalition $S$ to another existing (possibly empty) coalition
	$T$ while making neither the members of $S$ nor the members of
	$T$ worse off.
	\item We say that a coalition $S \subseteq N$ \emph{strongly blocks} a partition $\pi$, if each
	player $i \in S$ strictly prefers $S$ to his current coalition $\pi(i)$ in
	the partition $\pi$. A partition which admits no blocking coalition is said to be in the \emph{core (C)}.
	\item We say that a coalition $S \subseteq N$ \emph{weakly blocks} a partition $\pi$,
	if each player $i \in S$ weakly prefers $S$ to $\pi(i)$ and there exists at least one player $j \in S$ who strictly prefers $S$ to his
	current coalition $\pi(j)$. A partition which admits no weakly blocking coalition is in the \emph{strict core (SC)}.
	\eat{\item A partition $\pi$ is in the \emph{contractual strict core (CSC)} if the following is not possible: there exists a weakly blocking coalition $S$ and for all $C\in \pi$, and for all $j\in N\setminus S$ we have that $C\setminus S \succsim_j C$.}
	\item A partition $\pi$ is in the \emph{contractual strict core (CSC)} if any weakly blocking coalition $S$ makes at least one player $j\in N\setminus S$ worse off when breaking off.
	%, i.e., $\pi(j) \succ_j \pi(j)\setminus S$. 

	\end{itemize}

	The inclusion relationships between stability concepts depicted in Figure~\ref{fig:relations} follow from the definitions of the concepts. 
	We will also consider \emph{Pareto optimality}. 
	A partition $\pi$ of $N$ is \emph{Pareto optimal} if there exists no partition $\pi'$ of $N$ such that for all $i\in N$,  $\pi'(i) \succsim_i  \pi(i)$ and there exists at least one player $j\in N$ such that $\pi'(j) \succ_j  \pi(j)$. We say that a partition $\pi$ satisfies \emph{individual rationality} if each player does as well as by being alone, i.e., for all $i\in N$, $\pi(i) \succsim_i  \{i\}$.

	Throughout the paper, we assume familiarity with basic concepts of computational complexity~\citep[see, \eg][]{ArBa09a}.

	\eat{
	%\end{definition}
	%\vspace{-14mm}
	%\vskip-14mm
	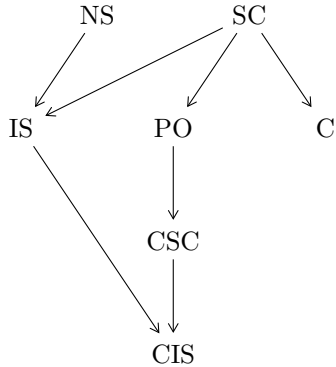
\begin{figure}
	\begin{center}
		\scalebox{1}{
		\begin{tikzpicture}
			\tikzstyle{pfeil}=[->,>=angle 60, shorten >=1pt,draw]
			\tikzstyle{onlytext}=[]

			\node[onlytext] (NS) at (1,0) {NS};
			\node[onlytext] (SC) at (3,0) {SC};
			\node[onlytext] (IS) at (0,-1.5) {IS};
			\node[onlytext] (CSC) at (2,-1.5) {CSC};
			\node[onlytext] (C) at (4,-1.5) {C};
			\node[onlytext] (CIS) at (2,-3) {CIS};

			\draw[pfeil] (NS) to (IS);
			\draw[pfeil] (SC) to (IS);
			\draw[pfeil] (SC) to (CSC);
			\draw[pfeil] (IS) to (CIS);
			\draw[pfeil] (SC) to (C);
			\draw[pfeil] (CSC) to (CIS);
		\end{tikzpicture}
		}
		\end{center}
	%	\vspace{-14mm}
	%	\vskip-15mm
		\caption{Inclusion relationships between stability concepts. For example, every Nash stable partition is also individually stable.}
		\label{fig:relations}
		\end{figure}

		}

		\begin{figure}
		\begin{center}
			\scalebox{1}{
			\begin{tikzpicture}
				\tikzstyle{pfeil}=[->,>=angle 60, shorten >=1pt,draw]
				\tikzstyle{onlytext}=[]

				\node[onlytext] (NS) at (1,0) {NS};
				\node[onlytext] (SC) at (3,0) {SC};
				\node[onlytext] (IS) at (0,-1.5) {IS};
				\node[onlytext] (PO) at (2,-1.5) {PO};
				\node[onlytext] (C) at (4,-1.5) {C};
				\node[onlytext] (CSC) at (2,-3) {CSC};
				\node[onlytext] (CIS) at (2,-4.5) {CIS};

				\draw[pfeil] (NS) to (IS);
				\draw[pfeil] (SC) to (IS);
				\draw[pfeil] (SC) to (PO);
				\draw[pfeil] (IS) to (CIS);
				\draw[pfeil] (SC) to (C);
				\draw[pfeil] (PO) to (CSC);
					\draw[pfeil] (CSC) to (CIS);
			\end{tikzpicture}
			}

			\end{center}
		%	\vspace{-14mm}
		%	\vskip-15mm
			\caption{Inclusion relationships between stability concepts. For example, every Nash stable partition is also individually stable.}
			\label{fig:relations}
			\end{figure}

	\eat{	

	\begin{fact}
	The following statements follow from the definitions of the stability concepts.
		\begin{enumerate}
			\item Nash stability $\Longrightarrow$ individual stability $\Longrightarrow$ contractual individual stability
			\item strict core stability $\Longrightarrow$ core stability
			%\item strong core stability $\Longrightarrow$ strict internal stability  
			%\item core stability $\Longrightarrow$ internal stability
			\item strict core stability $\Longrightarrow$ individual stability
			\item contractual strict core stability  $\Longrightarrow$ contractual individual stability
			%\item individual stability $\Longrightarrow$ individual rationality
			%\item core stability $\Longrightarrow$ individual rationality
			%\item Pareto efficient $\Longrightarrow$ contractual individual stability (Proposition 3,~\citet{BoJa02a})
			%\item contractual strict core stability  $\Longrightarrow$ contractual core stability, contractual individual stability, and weak contractual individual stability. (\citet{SuDi07b})
			%\item strong core $\Longrightarrow$ PO
			%\item core $\Longrightarrow$ weak PO
		\end{enumerate}
	\label{fact:implies-stabilty} 
	\end{fact}
	}
		%\vskip-25mm
		%\vspace{-4mm}

	\section{Contractual individual stability}
	\label{sec:cis}

	It is known that computing or even checking the existence of Nash stable or individually stable partitions in an \ASHG is NP-hard. 
	On the other hand, a potential function argument can be used to show that at least one CIS partition exists for every hedonic game~\citep{Ball04a}. 
	The potential function argument does not imply that a CIS partition can be computed in polynomial time. There are many cases in hedonic games, where a solution is guaranteed to exist but \emph{computing} it is not feasible. 
	For example, \citet{BoJa02a} presented a potential function argument for the existence of a Nash stable partition for \ASHGS with symmetric preferences. 
	However there are no known polynomial-time algorithms to \emph{compute} such partitions and there is evidence that there may not be any polynomial-time algorithm~\citep{GaSa10a}. 
	In this section, we show that a CIS partition can be computed in polynomial time for \ASHGS. The algorithm is formally described as Algorithm~\ref{alg-CIS-general}. Algorithm~\ref{alg-CIS-general} may also prove useful as a preprocessing or intermediate routine in other algorithms to compute different types of stable partitions of hedonic games.

	\begin{algorithm}[tb]
	  \caption{CIS partition of an \ASHG}
	  \label{alg-CIS-general}
	%  \small
	  %\scriptsize
	  \textbf{Input:} additively separable hedonic game $(N,\pref)$.\\
	  \textbf{Output:} CIS partition.

	  \begin{algorithmic}[] 
	\STATE $i\leftarrow 0$
	\STATE $R\leftarrow N$
	  \WHILE{$R\neq \emptyset$}\label{while-step}
	\STATE Take any player $a\in R$ 
	\STATE $h \leftarrow \sum_{b\in F(a,R)}v_a(b)$
	\STATE $z\leftarrow i+1$

	\FOR{$k\leftarrow 1$ to $i$}
	\STATE $h'\leftarrow \sum_{b\in S_k}v_a(b)$
	 \IF{($h < h'$) ~$\wedge~ (\forall b\in S_k$, $v_b(a)=0$)}
	\STATE $h\leftarrow h'$
	\STATE $z \leftarrow k$
	\ENDIF
	\ENDFOR

	\IF[$a$ is latecomer]{$z\ne i+1$}
	\STATE $S_{z}\leftarrow \{a\}\cup S_{z}$
	\STATE $R\leftarrow R \setminus \{a\}$

	\ELSE[$a$ is leader] 

	\STATE $i\leftarrow z$
	\STATE $S_i\leftarrow \{a\}$
	\STATE $S_i\leftarrow S_i\cup F(a,R)$ \COMMENT{add leader's helpers}
	\STATE $R\leftarrow R \setminus S_i$

	 \ENDIF

	\WHILE{$\exists j\in R$ such that $\forall i\in S_{z}$, $v_i(j)\geq 0$ and $\exists i\in S_{z}$, $v_i(j)>0$} 
	\STATE $R\leftarrow R\setminus \{j\}$
	\STATE $S_{z}\leftarrow S_{z} \cup \{j\}$ \COMMENT{add needed players}
	\ENDWHILE

		%\STATE Go to Step~\ref{while-step}
		\ENDWHILE

	  \RETURN $\{S_1,\ldots, S_i\}$
	 \end{algorithmic}
	\end{algorithm}
	\normalsize

	\begin{theorem}
	A CIS partition can be computed in polynomial time.	
	\label{prop:CIS-easy}
	\end{theorem}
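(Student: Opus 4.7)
The plan has two components: polynomial-time termination and CIS-correctness of the output.

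For the running time, the outer \texttt{while}-loop iterates at most $n=|N|$ times (each pass removes at least the designated player $a$ from $R$); within each pass, the \texttt{for}-loop over existing coalitions and the inner \texttt{while}-loop each execute $O(n)$ times, and every test is $O(n)$ arithmetic. Hence the total running time is polynomial, e.g.\ $O(n^3)$. For CIS, I argue by contradiction: suppose the returned partition $\pi$ admits some $a\in S_p=\pi(a)$ who strictly benefits by moving to a coalition $T$ (empty or $T=S_q\in\pi$) while leaving the members of $S_p\setminus\{a\}$ and of $T$ no worse off. Let $t_a$ be the iteration at which $a$ is placed, $h_a$ the final value of $h$ in that iteration, and $S_k^{(t_a)}$ the state of $S_k$ just before $t_a$. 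Two invariants will drive the argument. \emph{Monotonicity}: $u_a(S_p)\geq h_a$, because any $j$ added to $S_p$ after $t_a$ is either a needed player (so $v_a(j)\geq 0$, since $a$ is already a member of $S_z$ when the condition is tested) or a latecomer (so $v_a(j)=0$ by the neutrality requirement). \emph{Local optimality}: for every $k$ whose current members all satisfy $v_b(a)=0$, the sum $\sum_{b\in S_k^{(t_a)}} v_a(b)\leq h_a$. Moreover $h_a\geq 0$, since the initial value of $h$ is a sum of positive friend-values. These immediately dispatch $T=\emptyset$: CIS would require $0=u_a(\{a\})>u_a(S_p)\geq h_a\geq 0$.

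For $T=S_q\in\pi$ the CIS deviation provides $v_c(a)\geq 0$ for every $c\in S_q$ and $v_b(a)\leq 0$ for every $b\in S_p\setminus\{a\}$. I first show that $v_b(a)=0$ for every $b\in S_q^{(t_a)}$: otherwise some $b\in S_q^{(t_a)}$ has $v_b(a)>0$, and at the last active iteration of $S_q$ preceding $t_a$ the inner loop's needed-player condition for $a$ would already have been satisfied (the CIS inequality $v_i(a)\geq 0$ for $i\in S_q$ rules out negative contributions, and $b$ witnesses strict positivity), forcing $a\in S_q$ while $a\in R$---a contradiction. Local optimality then yields $h_q':=\sum_{b\in S_q^{(t_a)}} v_a(b)\leq h_a$. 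The remaining members $S_q\setminus S_q^{(t_a)}$ lie in $R^{(t_a)}$, and their $v_a$-contribution is bounded above by $\sum_{c\in (S_q\setminus S_q^{(t_a)})\cap F(a,R^{(t_a)})} v_a(c)$ since non-friends contribute non-positively. If $a$ is a leader at $t_a$, every friend in $R^{(t_a)}$ is directly absorbed into $S_p$, so $(S_q\setminus S_q^{(t_a)})\cap F(a,R^{(t_a)})=\emptyset$, and one concludes $u_a(S_q\cup\{a\})\leq h_q'\leq h_a\leq u_a(S_p)$---contradicting the strict improvement.

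The main obstacle is the latecomer sub-case, in which $h_a$ strictly exceeds the leader value $h_{\mathrm{leader}}=\sum_{b\in F(a,R^{(t_a)})} v_a(b)$ and friends of $a$ in $R^{(t_a)}$ may now end up in $S_q$ rather than in $S_p$. The plan is to partition $F(a,R^{(t_a)})$ according to each friend's eventual placement---into $S_p$, into $S_q$, or elsewhere---and exploit the fact that any friend absorbed into $S_p$ after $t_a$ can only enter as a needed player, thereby contributing its $v_a$-value to the increment $u_a(S_p)-h_a$. Combining this accounting with the strict update inequality $h_a>h_{\mathrm{leader}}$ and with the local-optimality bound on $h_q'$ should force $u_a(S_q\cup\{a\})\leq u_a(S_p)$, yielding the desired contradiction. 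The delicate bookkeeping of which friends of $a$ migrate to which coalition, and how the needed-player and neutrality conditions interact across subsequent iterations, is where the real technical work of the proof lies.
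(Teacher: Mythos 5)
Your run-time analysis and your treatment of the cases $T=\emptyset$, of coalitions created after $a$'s placement, and of the leader sub-case are sound, and both of your invariants (monotonicity of $u_a(\pi(a))$ and local optimality of $h_a$) are correctly established, as is the argument that every member of $S_q^{(t_a)}$ must value $a$ at exactly $0$. But the latecomer sub-case, which you explicitly leave as a ``plan,'' is not routine bookkeeping to be deferred: it is the crux of the theorem, and the inequality you hope to force there, $u_a(S_q\cup\{a\})\le u_a(S_p)$, is in fact \emph{false} for the algorithm as written. The paper's own proof disposes of this case with the assertion that a player who could profitably and admissibly join an earlier coalition ``would already have moved to it as a latecomer''; this overlooks exactly the phenomenon you isolated, namely that a coalition $S_q$ already existing at time $t_a$ can absorb friends of $a$ \emph{after} $t_a$ and thereby become strictly better for $a$ than $S_p$, while every member of $S_q$ values $a$ at $0$ (so the move is admissible) and every member of $S_p$ values $a$ at $0$ (so the departure is admissible).

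Concretely, take $N=\{b_1,b_2,c_1,c_2,a,f\}$ with $v_{b_1}(b_2)=1$, $v_{b_1}(f)=-1$, $v_{c_1}(c_2)=1$, $v_f(c_1)=1$, $v_a(b_1)=v_a(b_2)=10$, $v_a(c_1)=v_a(c_2)=9$, $v_a(f)=15$, and all other values $0$. Processing the players in the order $b_1,c_1,a,f$ gives $S_1=\{b_1,b_2\}$ and $S_2=\{c_1,c_2\}$ with no needed players added (no member strictly likes anyone left in $R$, and $b_1$ dislikes $f$). Then $a$ sees $h_{\mathrm{leader}}=v_a(f)=15$, $h'_1=20$, $h'_2=18$, so $a$ joins $S_1$ as a latecomer; $f$ still cannot enter $S_1$ because of $b_1$, and finally $f$ joins $S_2$ as a latecomer since $v_{c_1}(f)=v_{c_2}(f)=0$ and $v_f(c_1)=1>0$. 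The output is $\{\{a,b_1,b_2\},\{c_1,c_2,f\}\}$, and $a$ now has a valid CIS deviation to $\{c_1,c_2,f\}$: his utility rises from $20$ to $9+9+15=33$, every player he leaves values him at $0$, and every player he joins values him at $0$. So the missing case cannot be closed by more careful accounting of where $a$'s friends migrate; the statement you are trying to prove about the algorithm, as literally specified and for an arbitrary processing order, does not hold, and any complete argument must either modify the algorithm (e.g., re-running the absorption phases to a fixed point after latecomers arrive) or run the induction on a genuinely different invariant.
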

	\begin{proof}
		Our algorithm to compute a CIS partition can be viewed as successively giving a priority token to players to form the best possible coalition among the remaining players or join the best possible coalition which tolerates the player.
		The basic idea of the algorithm is described informally as follows. Set variable $R$ to $N$ and consider an arbitrary player $a\in R$.
		Call $a$ the \emph{leader} of the first coalition $S_i$ with $i=1$.
		Move any player $j$ such that $v_a(j)>0$ from $R$ to $S_i$. 
		Such players are called the \emph{leader's helpers}. 
		Then keep moving any player from $R$ to $S_i$ which is tolerated by all players in $S_i$ and strictly liked by at least one player in $S_i$. 
		Call such players \emph{needed players}. 
		Now increment $i$ and take another player $a$ from among the remaining players $R$ and check the maximum utility he can get from among $R$. 
		If this utility is less than the utility which can be obtained by joining a previously formed coalition in $\{S_1,\ldots, S_{i-1}\}$, then send the player to such a coalition where he can get the maximum utility (as long all players in the coalition tolerate the incoming player). 
		Such players are called \emph{latecomers}.
		Otherwise, form a new coalition $S_i$ around $a$ which is the best possible coalition for player $a$ taking only players from the remaining players $R$.
		%If $a$ is not a latecomer, then make him a leader and form the best possible coalition of $a$ among $R$.
		Repeat the process until all players have been dealt with and $R=\emptyset$. 
		%The algorithm is formally described as Algorithm~\ref{alg-CIS-general}.  
		We prove by induction on the number of coalitions formed that no CIS deviation can occur in the resulting partition. 
		The hypothesis is the following: 

	\emph{Consider the $k$th first formed coalitions $S_1,\ldots, S_k$. Then neither of the following can happen:
	\begin{enumerate}
	\item There is a CIS deviation by a player from among $S_1,\ldots, S_k$.
	\item There is a CIS deviation by a player from among $N\setminus \bigcup_{i\in\{1,\ldots,k\}}S_i$ to a coalition in $\{S_1,\ldots, S_k\}$.
	\end{enumerate}
	}

	\noindent
	\paragraph{Base case} Consider the coalition $S_1$. Then the leader of $S_1$ has no incentive to leave. 
	%If it leaves with a subset of players in $S_1$, it can only become less happy. 
	The leader's helpers are not allowed to leave by the leader. If they did, the leader's utility would decrease.
	%If a leader's helper moves out of $S_1$ (possibly with other players in $S_1$), then the leader only gets less happy. 
	For each of the needed players, there exists one player in $S_1$ who does not allow the needed player to leave.  
	Now let us assume a latecomer $i$ arrives in $S_1$. This is only possible if the maximum utility that the latecomer can derive from a coalition $C\subseteq (N\setminus S_1)$ is less than $\sum_{j\in S_1}v_{i}(j)$. Therefore once $i$ joins $S_1$, he will only become less happy by leaving $S_1$.

	Any player $i\in N\setminus S_1$ cannot have a CIS deviation to $S_1$. Either $i$ is disliked by at least one player in $S_1$ or $i$ is disliked by no player in $S_1$. In the first case, $i$ cannot deviate to $S_1$ even he has an incentive to. In the second case, player $i$ has no incentive to move to $S_1$ because if he had an incentive, he would already have moved to $S_1$ as a latecomer.

	\paragraph{Induction step} Assume that the hypothesis is true. Then we prove that the same holds for the formed coalitions $S_1,\ldots, S_k,S_{k+1}$. By the hypothesis, we know that players cannot leave coalitions $S_1,\ldots, S_k$.
	Now consider $S_{k+1}$. The leader $a$ of $S_{k+1}$ is either not allowed to join one of the coalitions in 
	$\{S_1,\ldots,S_k\}$
	%$S\setminus S_{k+1}$
	or if he is, he has no incentive to join it. Player $a$ would already have been member of $S_i$ for some $i\in \{1,\ldots, k\}$ if one of the following was true:

	\begin{itemize}
		\item There is some $i\in\{1,\dots,k\}$ such that the leader of $S_i$ likes $a$.
		\item  There is some $i\in\{1,\dots,k\}$ such that for all $b\in S_i$, $v_{b}(a)\geq 0$ and there exists $b\in S_i$ such that $v_b(a)>0$.
		\item There is some $i\in\{1,\dots,k\}$, such that for all $b\in S_i$, $v_{b}(a)= 0$ and  $\sum_{b\in S_i}v_{a}(b)> \sum_{b\in F(i,N\setminus \cup_{i=1}^{k}S_i)}v_{a}(b)$ and $\sum_{b\in S_i}v_{a}(b)\geq  \sum_{b\in S_j}v_{a}(b)$ for all $j\in \{1,\ldots, k\}$.
	\end{itemize}

	Therefore $a$ has no incentive or is not allowed to move to another $S_j$ for $j\in \{1,\ldots, k\}$. 
	Also $a$ will have no incentive to move to any coalition formed after $S_1,\ldots, S_{k+1}$ because he can do strictly better in $S_{k+1}$. 
	Similarly, $a$'s helpers are not allowed to leave $S_{k+1}$ even if they have an incentive to. Their movement out of $S_{k+1}$ will cause $a$ to become less happy. 
	Also each needed player in $S_{k+1}$ is not allowed to leave because at least one player in $S_k$ likes him. 
	Now consider a latecomer $l$ in $S_{k+1}$. Latecomer $l$ gets strictly less utility in any coalition 
	$C\subseteq N\setminus \bigcup_{i=1}^{k+1}S_i$. Therefore $l$ has no incentive to leave $S_{k+1}$.

	Finally, we prove that there exists no player $x\in N\setminus \bigcup_{j=1}^{k+1}S_i$ such that $x$ has an incentive to and is allowed to join $S_i$ for $i\in \{1,\ldots k+1\}$. By the hypothesis, we already know that $x$ does not have an incentive or is allowed to a join a coalition $S_i$ for $i\in \{1,\ldots k\}$. Since $x$ is not a latecomer for $S_{k+1}$, $x$ either does not have an incentive to join $S_{k+1}$ or is disliked by at least one player in $S_{k+1}$. 
	\qed
	\end{proof}

	\section{Core and strict core}
	\label{sec:core}

	For \ASHGS, the problem of testing the core membership of a partition is coNP-complete~\citep{SuDi07a}. This fact does not imply that checking the existence of a core stable partition is NP-hard. Recently, \citet{SuDi10a} showed that for \ASHGS checking whether a core stable or strict core stable partition exists is NP-hard in the strong sense. Their reduction relied on the asymmetry of the players' preferences.
	We prove that even with symmetric preferences, checking whether a core stable or a strict core stable partition exists is NP-hard in the strong sense. Symmetry is a natural, but rather strong condition, that can often be exploited algorithmically.

	We first present an example %(adapted from  \cite{BKS01a})
	of a six-player \ASHG with symmetric preferences for which the core (and thereby the strict core) is empty.

	\begin{example}
	\label{example:symm-core-empty}
		Consider a six player symmetric \ASHG adapted from an example by \citet{BKS01a} where

		\begin{itemize}
		\item $v_1(2)=v_3(4)=v_5(6)=6$;
		\item $v_1(6)=v_2(3)=v_4(5)=5$;
		\item $v_1(3)=v_3(5)=v_1(5)=4$;
		\item $v_1(4)=v_2(5)=v_3(6)=-33$; and
		\item $v_2(4)=v_2(6)=v_4(6)=-33$
		\end{itemize}
		as depicted in Figure~\ref{fig:example}.
		 %$v_1(2)=v_3(4)=v_5(6)=6$; \\$v_1(6)=v_2(3)=v_4(5)=5$;\\ $v_1(3)=v_3(5)=v_1(5)=4$;\\ $v_1(4)=v_2(5)=v_3(6)=-33$ and $v_2(4)=v_2(6)=v_4(6)=-33$.

		\begin{figure}
		\centering
		\scalebox{0.9}{
		\begin{tikzpicture}[auto]
			\tikzstyle{player}=[draw, circle, fill=white, minimum size=11pt, inner sep=4pt]
			\tikzstyle{lab}=[font=\small\itshape]

			\draw (0,0) node[player] (x11) {1}
				 ++(90:2cm) node[player] (x12) {2}
				 ++(30:2cm) node[player] (x13) {3}
				 ++(330:2cm) node[player] (x14) {4}
				 ++(270:2cm) node[player] (x15) {5}
				 ++(210:2cm) node[player] (x16) {6};
			%	-- (x11);
			\draw[-] (x11) to [bend right = 15] node[lab] {4} (x13);
			\draw[-] (x13) to [bend right = 15] node[lab] {4} (x15);
			\draw[-] (x15) to [bend right = 15] node[lab] {4} (x11);

			\draw[-] (x11) to node[lab] {6} (x12);
			\draw[-] (x12) to node[lab] {5} (x13);
			\draw[-] (x13) to node[lab] {6} (x14);
			\draw[-] (x14) to node[lab] {5} (x15);
			\draw[-] (x15) to node[lab] {6} (x16);
			\draw[-] (x16) to node[lab] {5} (x11);

		\end{tikzpicture}
		}
		\caption{Graphical representation of Example~\ref{example:symm-core-empty}. All edges not shown in the figure have weight $-33$.}
		\label{fig:example}
	\end{figure}
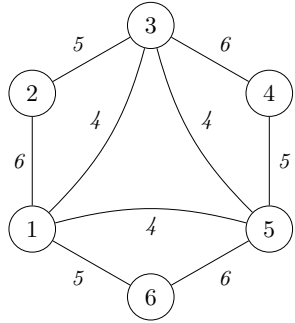

	It can be checked that no partition is core stable for the game. 
	%From the preferences of the players one can see that no two even players form a coalition together and also player $i$ and $(i+3)mod~6$ will not be in the same coalition. 
	Note that if $v_i(j)=-33$, then $i$ and $j$ cannot be in the same coalition of a core stable partition.
	Also, players can do better than in a partition of singleton players.
	Let coalitions which satisfy individual rationality be called feasible coalitions. We note that the following are the feasible coalitions:
	%next line added by HG:
	$\{1,2\}$, $\{1,3\}$, $\{1,5\}$, $\{1,6\}$, $\{1,2,3\}$, $\{1,3,5\}$, $\{1,5,6\}$, $\{2,3\}$, $\{3,4\}$, $\{3,4,5\}$, $\{3,5\}$, $\{4,5\}$ and $\{5,6\}$.

	Consider partition $$\pi=\{\{1,2\}, \{3,4,5\}, \{6\}\}.$$ 
	%Then $u_{\pi}(1)=6$, $u_{\pi}(2)=6$, $u_{\pi}(3)=10$, $u_{\pi}(4)=11$, $u_{\pi}(5)=9$
	%and $u_{\pi}(6)=0$. 

	Then,
	\begin{itemize}
	\item $u_{\pi}(1)=6$;
	\item $u_{\pi}(2)=6$;
	\item $u_{\pi}(3)=10$;
	\item $u_{\pi}(4)=11$;
	\item $u_{\pi}(5)=9$; and
	\item $u_{\pi}(6)=0$.
	\end{itemize}

	Out of the feasible coalitions listed above, the only weakly (and also strongly) blocking coalition is $\{1,5,6\}$ in which player 1 gets utility 9, player 5 gets utility 10, and player 6 gets utility 11. We note that the coalition $\{1,2,3\}$ is not a weakly or strongly blocking coalition because player 3 gets utility 9 in it. Similarly $\{1,3,5\}$ is not a weakly or strongly blocking coalition because both player 3 and player 5 are worse off.
	One way to prevent the deviation $\{1,5,6\}$ is to provide some incentive for player $6$ not to deviate with $1$ and $5$. This idea will be used in the proof of Theorem~\ref{prop:corehard}.

	\end{example}

	We now define a problem that is NP-complete is the strong sense:\\

	\noindent
	\textbf{Name}: {\sc ExactCoverBy3Sets (E3C)}: \\
	\noindent
	\textbf{Instance}: A pair $(R,S)$, where $R$ is a set and $S$ is a collection of subsets of 
	$R$ such that $|R|= 3m$ for some positive integer $m$ and $|s| = 3$ for each 
	$s\in S$. \\
	\noindent
	\textbf{Question}: Is there a sub-collection $S'\subseteq S$ which is a partition of $R$? \\

	It is known that E3C remains NP-complete even if each $r\in R$ occurs in 
	at most three members of $S$~\citep{SuDi10a}. We will use this assumption in the proof of Theorem~\ref{prop:corehard}, which will be shown by a reduction from E3C.

	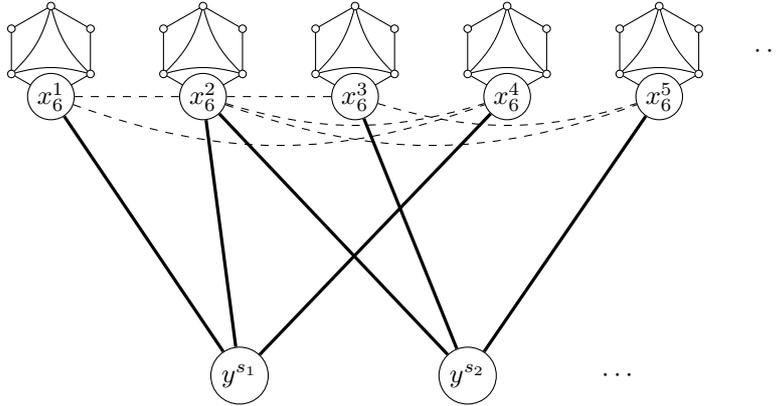
\begin{figure*}[tb]
		\centering
		\scalebox{1}{
		\begin{tikzpicture}[auto]

		%\tikzstyle{every node}=[draw, circle, fill=white, minimum size=2pt, inner sep=0pt]
		\tikzstyle{small}=[draw, circle, fill=white!100,minimum size=2pt, inner sep=1pt]
		\tikzstyle{lab}=[]
		\tikzstyle{collection}=[draw,circle,minimum size=11pt, inner sep=2pt]
		\tikzstyle{dots}=[]

		\foreach \i in {1,2,3,4,5}
		{
			\pgfmathparse{2*\i-2}
			\draw (\pgfmathresult,0) node[small] (x\i1) {}
				-- ++(90:0.6cm) node[small] (x\i2) {}
				-- ++(30:0.6cm) node[small] (x\i3) {}
				-- ++(330:.6cm) node[small] (x\i4) {}
				-- ++(270:.6cm) node[small] (x\i5) {}
				-- ++(210:.6cm) node[small] (x\i6) {$x_6^\i$}
				-- (x\i1);
			\draw[-] (x\i1) to [bend right = 15] node[lab] {} (x\i3);
			\draw[-] (x\i3) to [bend right = 15] node[lab] {} (x\i5);
			\draw[-] (x\i5) to [bend right = 15] node[lab] {} (x\i1);
		}
		\node[dots] (subgame-dots) at (10,0.3) {$\cdots$};

		\node[collection] (y1) at (3,-4) {$y^{s_1}$};
		\node[collection] (y2) at (6,-4) {$y^{s_2}$};
		\node[dots] (collection-dots) at (8,-4) {$\cdots$};

		\draw[-,very thick] (y1) to node {} (x16);
		\draw[-,very thick] (y1) to node {} (x26);
		\draw[-,very thick] (y1) to node {} (x46);

		\draw[-,very thick] (y2) to node {} (x26);
		\draw[-,very thick] (y2) to node {} (x36);
		\draw[-,very thick] (y2) to node {} (x56);	

		\draw[dashed] (x16) to (x26);
		\draw[dashed] (x26) to [bend right = 17] (x46);
		\draw[dashed] (x16) to [bend right = 20] (x46);

		\draw[dashed] (x26) to (x36);
		\draw[dashed] (x36) to [bend right = 17] (x56);
		\draw[dashed] (x26) to [bend right = 20] (x56);

		\end{tikzpicture}

		}

		\caption{Graphical representation of an ASHG derived from an instance of E3C in the proof of Theorem~\ref{prop:corehard}. Symmetric utilities other than $-33$ are given as edges. Thick edges indicate utility $10\frac{1}{4}$ and dashed edges indicate utility $1/2$. Each hexagon at the top looks like the one in Figure~\ref{fig:examplemod}.}
		\label{fig:proofTh2}
	\end{figure*}

	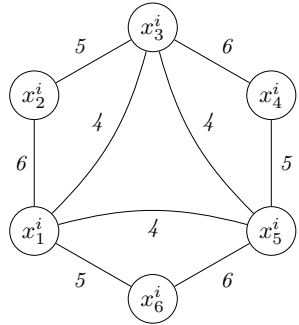
\begin{figure}[tb]
		\centering
		\scalebox{0.9}{
		\begin{tikzpicture}[auto]
			\tikzstyle{player}=[draw, circle, fill=white, minimum size=11pt, inner sep=2pt]
			\tikzstyle{lab}=[font=\small\itshape]
			\draw (0,0) node[player] (x11) {$x_1^i$}
				 ++(90:2cm) node[player] (x12) {$x_2^i$}
				 ++(30:2cm) node[player] (x13) {$x_3^i$}
				 ++(330:2cm) node[player] (x14) {$x_4^i$}
				 ++(270:2cm) node[player] (x15) {$x_5^i$}
				 ++(210:2cm) node[player] (x16) {$x_6^i$};
			%	-- (x11);
			\draw[-] (x11) to [bend right = 15] node[lab] {4} (x13);
			\draw[-] (x13) to [bend right = 15] node[lab] {4} (x15);
			\draw[-] (x15) to [bend right = 15] node[lab] {4} (x11);		
			\draw[-] (x11) to node[lab] {6} (x12);
			\draw[-] (x12) to node[lab] {5} (x13);
			\draw[-] (x13) to node[lab] {6} (x14);
			\draw[-] (x14) to node[lab] {5} (x15);
			\draw[-] (x15) to node[lab] {6} (x16);
			\draw[-] (x16) to node[lab] {5} (x11);	
		\end{tikzpicture}
		}
		\caption{Graphical representation of the \ASHG from Example~\ref{example:symm-core-empty} as used in the proof of Theorem~\ref{prop:corehard}. All edges not shown in the figure have weight $-33$.}
		\label{fig:examplemod}
	\end{figure}

	\begin{theorem}
		\label{prop:corehard}
		Checking whether a core stable or a strict core stable partition exists is NP-hard in the strong sense, even when preferences are symmetric.
	\end{theorem}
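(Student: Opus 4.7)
The plan is a polynomial-time reduction from \textsc{ExactCoverBy3Sets} (E3C), which remains NP-complete in the strong sense when each element appears in at most three sets. Given an E3C instance $(R,S)$ with $|R|=3m$, I would build the symmetric \ASHG sketched in Figure~\ref{fig:proofTh2}: for each $r_i\in R$ introduce a hexagonal gadget $H_i$ that is a verbatim copy of Example~\ref{example:symm-core-empty}, and for each $s_j\in S$ introduce one player $y^{s_j}$. Cross-component symmetric weights are $v(y^{s_j},x_6^i)=41/4$ whenever $r_i\in s_j$, $v(x_6^i,x_6^{i'})=1/2$ whenever some $s_j\in S$ contains both $r_i$ and $r_{i'}$, and $-33$ for every remaining pair. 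The guiding intuition is the one already hinted at after Example~\ref{example:symm-core-empty}: the hexagon alone has an empty core because of the strong blocker $\{x_1^i,x_5^i,x_6^i\}$ (in which $x_6^i$ receives utility $11$); if $x_6^i$ can be ``lifted out'' of $H_i$ into a coalition delivering utility strictly greater than $11$, then the remaining five players admit the stable block structure $\{\{x_1^i,x_2^i\},\{x_3^i,x_4^i,x_5^i\}\}$, and such a lift is encoded exactly by an exact cover.

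For the forward direction, given an exact cover $S'\subseteq S$, the candidate partition $\pi$ consists of the blocks $\{y^{s_j},x_6^i,x_6^{i'},x_6^{i''}\}$ for each $s_j=\{r_i,r_{i'},r_{i''}\}\in S'$, the singletons $\{y^{s_j}\}$ for $s_j\notin S'$, and the pair $\{x_1^i,x_2^i\}$ plus triple $\{x_3^i,x_4^i,x_5^i\}$ inside every $H_i$. Under $\pi$ every $x_6^i$ earns $41/4+2\cdot1/2=45/4>11$, which kills the only hexagon-internal blocker. The remaining verification splits into: (i) purely internal deviations on $\{x_1^i,\dots,x_5^i\}$, handled by a short case analysis of the 5-player restriction of Example~\ref{example:symm-core-empty}; (ii) deviations mixing players from different hexagons, ruled out by the $-33$ edges; and (iii) deviations of the form $\{y^{s_j}\}\cup T$ with $T\subseteq\bigcup_i\{x_6^i\}$, which cannot strictly improve any $x_6^i$, as each $x_6^i$ already sits at its maximum possible utility $45/4$.

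For the converse, suppose $\pi$ is core stable. For every $i$ I would show that $\pi(x_6^i)$ must have the form $\{y^{s_j},x_6^i,x_6^{i'},x_6^{i''}\}$ with $s_j=\{r_i,r_{i'},r_{i''}\}\in S$: any other configuration leaves $x_6^i$ with utility strictly less than $11$, so applying the analysis of Example~\ref{example:symm-core-empty} to $H_i$ exhibits a strong blocker. Because each $y^{s_j}$ lies in a unique block of $\pi$, the chosen sets form an exact cover of $R$. For the strict-core variant the converse is immediate via the inclusion strict core $\subseteq$ core, so it suffices to upgrade the forward direction to rule out weakly blocking coalitions as well. I expect this to be the main technical obstacle: the only delicate weak blockers are coalitions $\{y^{s_j},x_6^i,x_6^{i'},x_6^{i''}\}$ with $s_j\notin S'$ and $\{r_i,r_{i'},r_{i''}\}\subseteq s_j$, where the three $x_6$ players are merely indifferent while $y^{s_j}$ strictly gains. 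This is addressed either by a careful choice of the exact cover $S'$ using the combinatorial freedom left in the E3C instance, or by a tiny symmetric perturbation of the $41/4$ and $1/2$ weights that breaks the indifference without disturbing any of the inequalities used above.
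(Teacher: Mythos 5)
Your construction coincides with the paper's proof in every essential detail: the same hexagon gadget (a copy of Example~\ref{example:symm-core-empty} per element of $R$), the same set-players $y^s$, the same symmetric weights $10\frac{1}{4}$, $\frac{1}{2}$, and $-33$, the same witness partition built from an exact cover, and the same converse argument forcing each $x_6^i$ into a coalition $\{y^s\}\cup\{x_6^j\mid j\in s\}$. The one place you genuinely diverge is that you explicitly isolate the coalitions $\{y^{s}\}\cup\{x_6^i\mid i\in s\}$ for $s\in S\setminus S'$ as an unresolved obstacle for the strict-core half of the forward direction, and that worry is justified: in such a coalition each $x_6^i$ again receives exactly $10\frac{1}{4}+\frac{1}{2}+\frac{1}{2}=11\frac{1}{4}$, i.e., is indifferent, while $y^{s}$ improves from $0$ to $30\frac{3}{4}$, so the witness partition is weakly blocked whenever $S'$ is a proper sub-collection of $S$. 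The paper's proof asserts that the only weakly blocking candidate is $\{x_1^i,x_5^i,x_6^i\}$ and never considers this case, so on this point your write-up is more careful than the published argument, although neither of you closes the gap.

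Concerning your two proposed repairs: discard the first, since the offending coalition exists for \emph{every} $s\in S\setminus S'$ regardless of which exact cover is chosen, so no clever choice of $S'$ can help. The perturbation idea is the right instinct but is not automatic either: making the $y^s$-weights distinct gives each $x_6^i$ a strictly preferred set among those containing $i$, and if that set is not in the chosen cover, its coalition becomes a blocking coalition that now makes some $x_6^i$ \emph{strictly} better off; any perturbation must therefore be coordinated with the cover, or the gadget must be altered so that an unused $y^s$ does not strictly gain by collecting its three $x_6$-players. Note that the core half of the theorem (and of both proofs) is unaffected, since these coalitions leave the $x_6$-players indifferent and hence can never strongly block.
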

	\begin{proof}
	Let $(R,S)$ be an instance of E3C where $r\in R$ occurs in at most three members of $S$. We reduce $(R,S)$ to an \ASHGS with symmetric preferences $(N,\pref)$ in which there is a player $y^s$ corresponding to each $s\in S$ and there are six players $x_1^r, \ldots, x_6^r$ corresponding to each $r\in R$. These players have preferences over each other in exactly the way players $1,\ldots, 6$ have preference over each other as in Example~\ref{example:symm-core-empty}. 

	So, $N=\{x_1^r, \ldots, x_6^r \mid r \in R\} \cup \{y ^s \mid s\in S \}$. We assume that all preferences are symmetric. The player preferences are as follows:

	\begin{itemize}
	\item For $i\in R$, \\$v_{x_1^i}(x_2^i)=v_{x_3^i}(x_4^i)=v_{x_5^i}(x_6^i)=6$;\\
	 $v_{x_1^i}(x_6^i)=v_{x_2^i}(x_3^i)=v_{x_4^i}(x_5^i)=5$; and\\
	 $v_{x_1^i}(x_3^i)=v_{x_3^i}(x_5^i)=v_{x_1^i}(x_5^i)=4$;
	% $v_{x_1^i}(x_4^i)=v_{x_2^i}(x_5^i)=v_{x_3^i}(x_6^i)=-33$;
	% and $v_{x_2^i}(x_4^i)=v_{x_2^i}(x_6^i)=v_{x_4^i}(x_6^i)=-33$; 
	\item For any $s=\{k,l,m\}\in S$,\\ $v_{x_6^k}(x_6^l)=v_{x_6^l}(x_6^k)=v_{x_6^k}(x_6^m)=v_{x_6^m}(x_6^k)=v_{x_6^l}(x_6^m)=v_{x_6^m}(x_6^l)=1/2$; and\\
	 $v_{x_6^k}(y^s)=v_{x_6^l}(y^s)=v_{x_6^m}(y^s)=10\frac{1}{4}$;\\
	%\item $v_{y^s}(x_5^k)=v_{y^s}(x_5^l)=v_{y^s}(x_5^m)=-33$; and $v_{y^s}(x_1^k)=v_{y^s}(x_1^l)=v_{y^s}(x_1^m)=-33$; 
	%\item $v_{y^s}(y^{s'})=-33$ for any two $s,s'\in S$;
	\item $v_i(j)=-33$ for any $i,j\in N$ for valuations not defined above. 
	\end{itemize}

	We prove that $(N,P)$ has a non-empty strict core (and thereby core) if and only if there exists an $S'\subseteq S$ such that $S'$ is a partition of $R$.

	Assume that there exists an $S'\subseteq S$ such that $S'$ is a partition of $R$. Then we prove that there exists a strict core  stable (and thereby core stable) partition $\pi$ where $\pi$ is defined as follows:
	\begin{eqnarray*}
	&&\{\{x_1^i, x_2^i\}, \{x_3^i,x_4^i,x_5^i\}\mid i\in R\}
	 \cup \{\{y^s\}\mid s\in S\setminus S'\}\\
	& \cup&\{\{y^s\cup \{x_6^i\mid i\in s\}\}\mid s\in S'\}\text{.}
	\end{eqnarray*}

	For all $i\in R$, 

	\begin{itemize}
	\item $u_{\pi}(x_1^i)=6$;
	\item $u_{\pi}(x_2^i)=6$;
	\item $u_{\pi}(x_3^i)=10$;
	\item $u_{\pi}(x_4^i)=11$;
	\item $u_{\pi}(x_5^i)=9$; and 
	\item $u_{\pi}(x_6^i)=1/2+1/2+10\frac{1}{4}=11\frac{1}{4}>11.$
	\end{itemize}

	%$u_{\pi}(x_1^i)=6$, $u_{\pi}(x_2^i)=6$, $u_{\pi}(x_3^i)=10$, $u_{\pi}(x_4^i)=11$, $u_{\pi}(x_5^i)=9$
	%and $u_{\pi}(x_6^i)=1/6+1/6+10\frac{2}{3}=11$. 
	Also $u_{\pi}(y^s)=3\times (10\frac{1}{4})=30\frac{3}{4}$ for all $s\in S'$ and $u_{\pi}(y^s)=0$ for all $s\in S\setminus S'$. We see that for each player, his utility is non-negative. Therefore there is no incentive for any player to deviate and form a singleton coalition. From Example~\ref{example:symm-core-empty} we also know that the only possible strongly blocking (and weakly blocking) coalition is $\{x_1^i\, x_5^i, x_6^i\}$ for any $i\in R$. However, $x_6^i$ has no incentive to be part $\{x_1^i,x_5^i,x_6^i\}$ because $u_{\pi}(x_6^i)=11$ and $v_{x_6^i}(x_5^i)+v_{x_6^i}(x_1^i)=6+5=11$. 
	Also $x_1^i$ and $x_5^i$ have no incentive to join $\pi(x_6^i)$ because their new utility will become negative because of the presence of the $y^s$ player. Assume for the sake of contradiction that $\pi$ is not core stable and $x_6^i$ can deviate with a lot of $x_6^j$s. But, $x_6^i$ can only deviate with a maximum of six other players of type $x_6^j$ because $i\in R$ is present in a maximum of three elements in $S$. In this case $x_6^i$ gets a maximum utility of only $1$. 
	Therefore $\pi$ is in the strict core (and thereby the core).

	We now assume that there exists a partition which is core stable. Then we prove that there exists an $S'\subseteq S$ such that $S'$ is a partition of $R$. 
	For any $s=\{k,l,m\}\in S$, the new utilities created due to the reduction gadget are only beneficial to $y^s$, $x_6^k$, $x_6^l$, and $x_6^m$.
	We already know that the only way the partition is core stable is if $x_6^i$ can be provided disincentive to deviate with $x_5^i$ and $x_1^i$.
	The claim is that each $x_6^i$ needs to be in a coalition with exactly one $y^s$ such that $i\in s\in S$ and exactly two other players $x_6^j$ and $x_6^k$ such that $\{i,j,k\}=s\in S$. We first show that $x_6^i$ needs to be with exactly one $y^s$ such that $i\in s\in S$. Player needs to be with at least one such $y^s$. If $x_6^i$ is only with other $x_6^j$s, then we know that $x_6^i$ gets a maximum utility of only $6\times 1/2=3$.  Also, player $x_6^i$ cannot be in a coalition with $y^s$ and $y^{s'}$ such that $i\in s$ and $i\in s'$ because both $y^s$ and $y^{s'}$ then get negative utility. Each $x_6^i$ also needs to be with at least 2 other players $x_6^j$ and $x_6^k$ where $j$ and $k$ are also members of $s$. If $x_6^i$ is with at least three players $x_6^j$, $x_6^k$ and $x_6^k$, then there is one element among $a\in \{j,k,l\}$ such that $a\notin s$. Therefore $y^s$ and $x_6^a$ hate each other and the coalition $\{y^s, x_6^i,x_6^j, x_6^k,x_6^k\}$ is not even individually rational. Therefore for the partition to be core stable each $x_6^i$ has to be with exactly one $y^s$ such that $i\in s$ and and least 2 other players $x_6^j$ and $x_6^k$ where $j$ and $k$ are also members of $s$. This implies that there exists an $S'\subseteq S$ such that $S'$ is a partition of $R$.\qed

	\end{proof}

	\section{Contractual strict core and Pareto optimality}
	\label{sec:csc}

	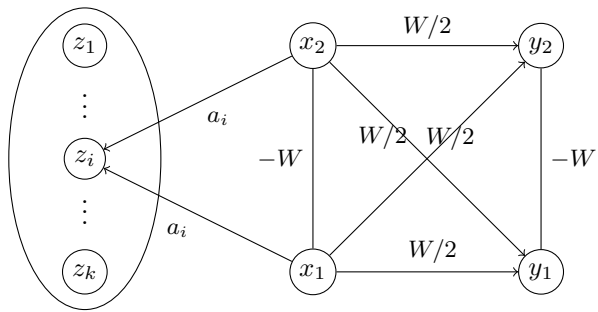
\begin{figure*}[htb]
	\centering
	\scalebox{1}{
	\begin{tikzpicture}[auto]
		\tikzstyle{player}=[draw, circle, fill=white, minimum size=11pt, inner sep=2pt]
		\tikzstyle{playerz}=[draw, circle, fill=white, minimum size=8pt, inner sep=2pt]
		\tikzstyle{lab}=[font=\small\itshape]
		\tikzstyle{dots}=[]

		\draw 
		(-3,3) node[playerz] (z1) {$z_1$}
		(-3,1.5) node[playerz] (zi) {$z_i$}
		(-3,0) node[playerz] (zk) {$z_k$}
		(0,0) node[player] (x1) {$x_1$}
			 ++(90:3cm) node[player] (x2) {$x_2$}
			 %++(30:2cm) node[player] (x13) {3}
			 ++(0:3cm) node[player] (y2) {$y_2$}
			 ++(270:3cm) node[player] (y1) {$y_1$};
			% ++(210:2cm) node[player] (x16) {6};
		%	-- (x11);
		%\draw[-] (x1) to [] node[lab] {4} (x13);
		\draw[->] (x1) to [] node[lab, above=6pt, left=-2pt] {$W/2$} (y1);	
		\draw[->] (x2) to [] node[lab, above=-1pt] {$W/2$} (y2);		
		\draw[-] (x1) to node[lab] {$-W$} (x2);
		\draw[->] (x1) to node[lab, above=20pt, right=-4pt] {$W/2$} (y2);
		\draw[->] (x2) to node[lab, below=22pt, right=-4pt] {$W/2$} (y1);
		\draw[-] (y2) to node[lab] {$-W$} (y1);
		\draw[->] (x1) to node[lab] {$a_i$} (zi);
		\draw[->] (x2) to node[lab] {$a_i$} (zi);

	\draw (-3,1.5) ellipse (1cm and 2cm);
		\node[dots] (subgame-dots) at (-3,0.9) {$\vdots$};
		\node[dots] (subgame-dots) at (-3,2.3) {$\vdots$};
	\end{tikzpicture}
	}
	\caption{Graphical representation of the \ASHG in the proof of Theorem~\ref{th:csc-hard}. For all $i\in \{1,\ldots, k\}$, an edge from $x_1$ and $x_2$ to $z_i$ has weight $a_i$. All other edges not shown in the figure have weight zero.}
	\label{fig:exampleCSC}
	\end{figure*}

	In this section, we prove that verifying whether a partition is CSC stable is coNP-complete. Interestingly, coNP-completeness holds even if the partition in question consists of the grand coalition.  
	The proof of Theorem~\ref{th:csc-hard} is by a reduction from the following weakly NP-complete problem.\\

	\noindent
	\textbf{Name}: {\sc Partition} \\
	\textbf{Instance}: A set of $k$ positive integer weights $A=\{a_1, \ldots, a_k \}$ such that $\sum_{a_i\in A}a_i=W$.\\
	\textbf{Question}: Is it possible to partition $A$, into two subsets $A_1\subseteq A$, $A_2\subseteq A$ so that $A_1\cap A_2=\emptyset$ and $A_1\cup A_2=A$ and $\sum_{a_i\in A_1}a_i=\sum_{a_i\in A_2}a_i=W/2$?\\

	\begin{theorem}\label{th:csc-hard}
	Verifying whether the partition consisting of the grand coalition is CSC stable is weakly coNP-complete.
	\end{theorem}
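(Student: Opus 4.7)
The plan is to combine a simple coNP membership argument with a reduction from PARTITION, using the gadget depicted in Figure~\ref{fig:exampleCSC}. Membership is immediate: a witness of non-CSC-stability of the grand coalition is a coalition $S$, and one verifies in polynomial time that $S$ weakly blocks and that every $j\in N\setminus S$ satisfies $u_{N\setminus S}(j)\geq u_N(j)$. For hardness, I first evaluate utilities in the grand coalition $\pi = \{N\}$: using the two $-W$ edges, the four directed edges of weight $W/2$, and $\sum_i a_i = W$, one obtains $u_\pi(x_1)=u_\pi(x_2)=W$, $u_\pi(y_1)=u_\pi(y_2)=-W$, and $u_\pi(z_i)=0$ for all $i$ (the $z_i$'s only receive incoming edges).

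For the forward direction, given a PARTITION solution $A=A_1\sqcup A_2$ with $\sum_{A_1}a_i = \sum_{A_2}a_i = W/2$, let $Z_j=\{z_i:a_i\in A_j\}$ and take $S=\{x_1,y_2\}\cup Z_1$. A direct computation shows $u_S(x_1)=W/2+W/2=W$ (unchanged), $u_S(y_2)=0$ (strict improvement from $-W$, since $y_1\notin S$), and $u_S(z_i)=0$; on the other side, $N\setminus S=\{x_2,y_1\}\cup Z_2$ gives $u_{N\setminus S}(x_2)=W$ (unchanged) and $u_{N\setminus S}(y_1)=0$ (weakly improved). Hence $S$ is a CSC block of the grand coalition.

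For the reverse direction, suppose $S$ is a CSC block of $\pi$, and split on $|S\cap\{x_1,x_2\}|$. If both $x_1,x_2\in S$, then the constraint $u_S(x_i)\geq W$ combined with their mutual $-W$ edge forces $y_1,y_2$ and every $z_i$ into $S$, giving $S=N$ and ruling out any strict improver. Symmetrically, neither $x_1$ nor $x_2$ in $S$ forces $N\setminus S=N$, i.e.\ $S=\emptyset$. So exactly one of $x_1,x_2$, say $x_1$, lies in $S$. The constraints $u_S(x_1)\geq W$ and $u_{N\setminus S}(x_2)\geq W$ must jointly account for the full $2W$ of utility that $x_1,x_2$ extract from $\{y_1,y_2,z_1,\ldots,z_k\}$, so both are tight and in particular $(W/2)[y_1\in S]+(W/2)[y_2\in S]+\sum_{z_i\in S}a_i = W$. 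Since $x_1$ and every $z_i$ are tight, any strict improver must lie in $\{y_1,y_2\}\cap S$, and because of the $-W$ edge between them, such a $y_i$ strictly improves only if the other $y$-player is \emph{not} in $S$. This forces exactly one of $y_1,y_2$ into $S$, whence $\sum_{z_i\in S}a_i=W/2$, and $\{a_i:z_i\in S\}$ solves PARTITION.

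The main obstacle is the tight accounting in the reverse direction: one must rule out the subcases where both $y$-players end up on the same side of the split (each of which leaves no strict improver), and then use the exact $W/2$ split on the $z_i$'s that the tightness forces to recover the PARTITION instance. The $-W$ penalties between $x_1,x_2$ and between $y_1,y_2$ are precisely what make every nontrivial CSC block balanced in this way.
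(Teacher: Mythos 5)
Your proposal is correct and follows essentially the same approach as the paper: coNP membership via the deviating coalition as certificate, and hardness via the same {\sc Partition} gadget of Figure~\ref{fig:exampleCSC} with the identical utility calculations and the same forward-direction deviation. Your reverse direction merely reorganizes the paper's five-case enumeration into a cleaner split on $|S\cap\{x_1,x_2\}|$ plus the $2W$-tightness accounting, which is an exhaustive (and arguably tidier) presentation of the same argument.
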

	\begin{proof}
	The problem is clearly in coNP because a partition $\pi'$ resulting by a CSC deviation from $\{N\}$ is a succinct certificate that $\{N\}$ is not CSC stable.
	We prove NP-hardness of deciding whether the grand coalition is \emph{not} CSC stable by a reduction from {\sc Partition}.
	We can reduce an instance of $I$ of {\sc Partition} to an instance $I'=((N,\pref),\pi)$ where $(N,\pref)$ is an \ASHG defined in the following way:

	\begin{itemize}
	\item $N=\{x_1,x_2,y_1,y_2, z_1,\ldots,z_k\}$,
	\item $v_{x_1}(y_1)=v_{x_1}(y_2)=v_{x_2}(y_1)=v_{x_2}(y_2)=W/2$,
	\item $v_{x_1}(z_i)=v_{x_2}(z_i)=a_i$, for all $i\in \{1,\ldots, k\}$
	\item $v_{x_1}(x_2)=v_{x_2}(x_1)=-W$,
	\item $v_{y_1}(y_2)=v_{y_2}(y_1)=-W$,
	\item $v_{a}(b)=0$ for any $a,b\in N$ for which $v_{a}(b)$ is not already defined, and
	\item $\pi=\{N\}$.
	\end{itemize}

	We see that $u_{\pi}(x_1)=u_{\pi}(x_1)=W$, $u_{\pi}(y_1)=u_{\pi}(y_2)=-W$, $u_{\pi}(z_i)=0$ for all $i\in \{1,\ldots, k\}$.
	We show that $\pi$ is not CSC stable if and only if $I$ is a `yes' instance of {\sc Partition}. 
	Assume $I$ is a `yes' instance of {\sc Partition} and there exists an $A_1\subseteq A$ such that $\sum_{a_i\in A_1}a_i=W/2$.
	Then, form the following partition

	$$\pi'=\{\{x_1,y_1\}\cup \{z_i \mid a_i\in A_1\}, \{x_2,y_2\}\cup \{z_i \mid a_i\in N\setminus A_1\} \} $$

	Then, 

	\begin{itemize}
		\item $u_{\pi'}(x_1)=u_{\pi'}(x_1)=W$; 
		\item $u_{\pi'}(y_1)=u_{\pi'}(y_2)=0$; and 
		\item $u_{\pi}(z_i)=0$ for all $i\in \{1,\ldots, k\}$. 
	\end{itemize}

	%$u_{\pi'}(x_1)=u_{\pi'}(x_1)=W$, $u_{\pi'}(y_1)=u_{\pi'}(y_2)=0$, $u_{\pi}(z_i)=0$ for all $i\in \{1,\ldots, k\}$. 
	The coalition $C_1=\{x_1,y_1\}\cup \{z_i \mid a_i\in A_1\}$ can be considered as a coalition which leaves the grand coalition so that all players in $N$ do as well as before and at least one player in $C_1$, i.e., $y_1$ gets strictly more utility. Also, the departure of $C_1$ does not make any player in $N\setminus C_1$ worse off.

	Assume that $I$ is a `no' instance of {\sc Partition} and there exists no $A_1\subseteq A$ such that $\sum_{a_i\in A_1}a_i=W/2$. We show that no CSC deviation is possible from $\pi$. %Without loss of generalization, any CSC blocking coalition will have to include $y_1$.
	We consider different possibilities for a CSC blocking coalition $C$:

	\begin{enumerate}
	\item $x_1,x_2, y_1, y_2 \notin C$, %$C\subseteq \{z_1,\ldots z_k\}$
	\item $x_1,x_2\notin C$ and there exists $y\in \{y_1,y_2\}$ such that $y\in C$,
	\item $x_1,x_2,y_1,y_2\in C$,
	\item $x_1,x_2\in C$ and $|C\cap \{y_1,y_2\}|\leq 1$,
	\item there exists $x\in \{x_1,x_2\}$ and $y\in \{y_1,y_2\}$ such that $x,y\in C$, $\{x_1,x_2\}\setminus x \nsubseteq C$, and $\{y_1,y_2\}\setminus y \nsubseteq C$
	\end{enumerate}

	We show that in each of the cases, $C$ is a not a valid CSC blocking coalition.

	\begin{enumerate}
	\item If $C$ is empty, then there exists no CSC blocking coalition. If $C$ is not empty, then $x_1$ and $x_2$ gets strictly less utility when a subset of $\{z_1,\ldots, z_k\}$ deviates.
	\item In this case, both $x_1$ and $x_2$ gets strictly less utility when $y\in \{y_1,y_2\}$ leaves $N$.
	\item If $\{z_1,\ldots, z_k\}\subset C$, then there is no deviation as $C=N$. If there exists a $z_i\in \{z_1,\ldots, z_k\}$ such that $z_i\notin C$, then $x_1$ and $x_2$ get strictly less utility than in $N$.
	\item If $|C\cap \{y_1,y_2\}|= 0$, then the utility of no player increases. If $|C\cap \{y_1,y_2\}|=1$, then the utility of $y_1$ and $y_2$ increases but the utility of $x_1$ and $x_2$ decreases.
	\item Consider $C=\{x,y\}\cup S$ where $S\subseteq \{z_1,\ldots, z_k\}$. Without loss of generality, we can assume that $x=x_1$ and $y=y_1$. We know that $y_1$ and $y_2$ gets strictly more utility because they are now in different coalitions. Since $I$ is a `no' instance of {\sc Partition}, we know that there exists no $S$ such that $\sum_{a\in S}v_{x_1}(a)=W/2$. If $\sum_{a\in S}v_{x_1}(a)>W/2$, then $u_{\pi}(x_2)<W$. If $\sum_{a\in S}v_{x_1}(a)<W/2$, then $u_{\pi}(x_1)<W$. 
	\end{enumerate}

	Thus, if $I'$ is a `no' instance of {\sc Partition}, then there exists no CSC deviation. \qed
	\end{proof}

	From the proof of Theorem~\ref{th:csc-hard}, it can be seen that $\pi$ is not Pareto optimal if and only if $I$ is a `yes' instance of {\sc Partition}. %Therefore, we have the following result:

	\begin{theorem}\label{th:GC-PO}
	Verifying whether the partition consisting of the grand coalition is Pareto optimal is coNP-complete.
	\end{theorem}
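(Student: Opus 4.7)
The plan is to reuse the very reduction from \textsc{Partition} constructed in the proof of Theorem~\ref{th:csc-hard} and argue that, in that instance, the grand coalition $\{N\}$ is Pareto optimal if and only if the given \textsc{Partition} input is a `no' instance. Membership in coNP is immediate: any partition $\pi'$ that Pareto dominates $\{N\}$ is a polynomial-size certificate, and checking dominance is straightforward.

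For the forward implication (\textsc{Partition} `yes' $\Rightarrow$ not PO), I would verify that the very partition $\pi' = \{\{x_1,y_1\}\cup\{z_i : a_i\in A_1\},\ \{x_2,y_2\}\cup\{z_i : a_i\in A\setminus A_1\}\}$ already produced in the CSC proof is a Pareto improvement on $\{N\}$: a direct utility check gives $u_{\pi'}(x_r) = W/2 + W/2 = W$ and $u_{\pi'}(z_i) = 0$ for all relevant $r,i$, matching their values under $\{N\}$, while $y_1,y_2$ strictly improve from $-W$ to $0$.

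For the converse, I would suppose some $\pi'$ Pareto dominates $\{N\}$ and extract a balanced partition of $A$. Since $v_{z_i}(\cdot)\equiv 0$, only the placement of $x_1,x_2,y_1,y_2$ matters. If $x_1,x_2$ share a coalition in $\pi'$, their mutual $-W$ penalty forces that coalition to contain both $y_i$'s and all $z_j$'s just to preserve $u_{\pi'}(x_r) \ge W$, but then $y_1,y_2$ are still at utility $-W$ and nobody is strictly better off---contradicting Pareto domination. Otherwise, $x_1$ and $x_2$ sit in distinct coalitions $C_1,C_2$; setting $Y_r = C_r\cap\{y_1,y_2\}$ and $Z_r = C_r\cap\{z_1,\ldots,z_k\}$, each $u_{\pi'}(x_r) = (W/2)|Y_r| + \sum_{z_i\in Z_r} a_i \ge W$. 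Summing the two inequalities and using $|Y_1|+|Y_2|\le 2$ and $\sum_{z_i\in Z_1}a_i + \sum_{z_i\in Z_2}a_i \le W$ forces $|Y_1|=|Y_2|=1$ and $\sum_{z_i\in Z_1}a_i = \sum_{z_i\in Z_2}a_i = W/2$, yielding the required balanced split of $A$.

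The main subtlety relative to Theorem~\ref{th:csc-hard} is that Pareto deviations are not constrained to the `one coalition leaves, the rest stay together' shape used for CSC---$\pi'$ may refine $N$ arbitrarily. The reason this causes no trouble here is that the $z_i$'s are universally indifferent and the only nonzero negative valuations outside the $x_1,x_2$ pair come from $y_1$ versus $y_2$, so arbitrary refinements are absorbed harmlessly into essentially the same case distinction that handled CSC.
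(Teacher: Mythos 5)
Your proposal is correct and is essentially the paper's own argument: the paper proves Theorem~\ref{th:GC-PO} only by the one-line remark that, in the reduction of Theorem~\ref{th:csc-hard}, the grand coalition fails to be Pareto optimal exactly when the {\sc Partition} instance is a `yes' instance, and you supply the details it omits, including the correct observation that Pareto deviations need not have the ``one coalition breaks off, the rest stay put'' shape used for the CSC analysis.

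One step in your converse direction is stated too strongly. Summing $u_{\pi'}(x_1)+u_{\pi'}(x_2)\ge 2W$ against the upper bound $(W/2)(|Y_1|+|Y_2|)+\sum_i a_i\le 2W$ only forces $|Y_1|+|Y_2|=2$, that every $z_i$ lies in $Z_1\cup Z_2$, and that $u_{\pi'}(x_1)=u_{\pi'}(x_2)=W$ exactly; it does not by itself exclude the configuration $|Y_1|=2$, $|Y_2|=0$ with $Z_1=\emptyset$ and $\sum_{z_i\in Z_2}a_i=W$, which satisfies both tight constraints. To rule it out you must invoke, as you already do in the case where $x_1$ and $x_2$ cohabit, that a Pareto improvement needs some player to strictly gain: the $z_i$ are stuck at $0$, both $x_r$ are pinned at exactly $W$, so the strict gain can only come from $y_1$ or $y_2$, which happens precisely when they are separated, i.e.\ $|Y_1|=|Y_2|=1$; the tight constraints $u_{\pi'}(x_r)=W$ then give $\sum_{z_i\in Z_r}a_i=W/2$ for each $r$. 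With that one-line patch the argument is complete.
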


	\section{Conclusion and Discussion}

	We presented a number of new computational results concerning stable partitions of \ASHGS. 
	First, we proposed a polynomial-time algorithm for computing a contractually individually stable (CIS) partition. Secondly, we showed that checking whether the core or strict core exists is NP-hard in the strong sense, even if the preferences of the players are symmetric. Finally, we presented the first complexity result concerning the contractual strict core (CSC), namely that verifying whether a partition is in the CSC is coNP-complete. We saw that considering CSC deviations helps reason about the more complex Pareto optimal improvements. As a result, we established that checking whether the partition consisting of the grand coalition is Pareto optimal is also coNP-complete.

	We note that Algorithm~\ref{alg-CIS-general} may very well return a partition that fails to satisfy individual rationality, \ie players may get negative utility. It is an open question how to efficiently compute a CIS partition that is guaranteed to satisfy individual rationality.
	We also note that Theorem~\ref{th:csc-hard} may not imply anything about the complexity of \emph{computing} a CSC partition. Studying the complexity of computing a CSC stable partition is left as future work.

%%%%%%%%%%%%%%%%%%%%%%%%%%%%%%%%%%%%%%%%%%%%%%%% CONTACT %%%%%%%%%%%%%%%%%%%%%%%%%%%%%%

\begin{contact}
Haris Aziz, Felix Brandt, and Hans Georg Seedig\\
Department of Informatics\\ 
Technische Universit\"at M\"unchen\\
%Boltzmannstr. 3\\
85748 Garching bei M\"unchen, Germany\\
\texttt{\small\{aziz,brandtf,seedigh\}@in.tum.de}
\end{contact}


\begin{thebibliography}{22}
	\providecommand{\natexlab}[1]{#1}
	\providecommand{\url}[1]{\texttt{#1}}
	\expandafter\ifx\csname urlstyle\endcsname\relax
	  \providecommand{\doi}[1]{doi: #1}\else
	  \providecommand{\doi}{doi: \begingroup \urlstyle{rm}\Url}\fi

	\bibitem[Arora and Barak(2009)]{ArBa09a}
	S.~Arora and B.~Barak.
	\newblock \emph{Computational Complexity: {A} Modern Approach}.
	\newblock Princeton University Press, 2009.

	\bibitem[Aziz et~al.(2010)Aziz, Brandt, and Seedig]{ABS10a}
	H.~Aziz, F.~Brandt, and H.~G. Seedig.
	\newblock Optimal partitions in additively separable hedonic games.
	\newblock In \emph{{Proceedings of the Third International Workshop on
	  Computational Social Choice (COMSOC)}}, pages 271--282, 2010.

	\bibitem[Ballester(2004)]{Ball04a}
	C.~Ballester.
	\newblock {NP}-completeness in hedonic games.
	\newblock \emph{Games and Economic Behavior}, 49\penalty0 (1):\penalty0 1--30,
	  2004.

	\bibitem[Banerjee et~al.(2001)Banerjee, Konishi, and S{\"o}nmez]{BKS01a}
	S.~Banerjee, H.~Konishi, and T.~S{\"o}nmez.
	\newblock Core in a simple coalition formation game.
	\newblock \emph{Social Choice and Welfare}, 18:\penalty0 135--153, 2001.

	\bibitem[Barber\`{a} et~al.(2004)Barber\`{a}, Bossert, and Pattanaik]{BBP04a}
	S.~Barber\`{a}, W.~Bossert, and P.~K. Pattanaik.
	\newblock Ranking sets of objects.
	\newblock In S.~Barber\`{a}, P.~J. Hammond, and C.~Seidl, editors,
	  \emph{Handbook of Utility Theory}, volume~II, chapter~17, pages 893--977.
	  Kluwer Academic Publishers, 2004.

	\bibitem[Bogomolnaia and Jackson(2002)]{BoJa02a}
	A.~Bogomolnaia and M.~O. Jackson.
	\newblock The stability of hedonic coalition structures.
	\newblock \emph{Games and Economic Behavior}, 38\penalty0 (2):\penalty0
	  201--230, 2002.

	\bibitem[Branzei and Larson(2009)]{BrLa09a}
	S.~Branzei and K.~Larson.
	\newblock Coalitional affinity games and the stability gap.
	\newblock In \emph{Proceedings of the 21st International Joint Conference on
	  Artificial Intelligence (IJCAI)}, pages 1319--1320, 2009.

	\bibitem[Burani and Zwicker(2003)]{BuZw03a}
	N.~Burani and W.~S. Zwicker.
	\newblock Coalition formation games with separable preferences.
	\newblock \emph{Mathematical Social Sciences}, 45\penalty0 (1):\penalty0
	  27--52, 2003.

	\bibitem[Cechl{\'a}rov{\'a}(2008)]{Cech08a}
	K.~Cechl{\'a}rov{\'a}.
	\newblock Stable partition problem.
	\newblock In \emph{Encyclopedia of Algorithms}. 2008.

	\bibitem[Deng and Papadimitriou(1994)]{DePa94a}
	X.~Deng and C.~H. Papadimitriou.
	\newblock On the complexity of cooperative solution concepts.
	\newblock \emph{Mathematics of Operations Research}, 12\penalty0 (2):\penalty0
	  257--266, 1994.

	\bibitem[Dimitrov et~al.(2006)Dimitrov, Borm, Hendrickx, and Sung]{DBHS06a}
	D.~Dimitrov, P.~Borm, R.~Hendrickx, and S.~C. Sung.
	\newblock Simple priorities and core stability in hedonic games.
	\newblock \emph{Social Choice and Welfare}, 26\penalty0 (2):\penalty0 421--433,
	  2006.

	\bibitem[Dr{\`e}ze and Greenberg(1980)]{DrGr80a}
	J.~H. Dr{\`e}ze and J.~Greenberg.
	\newblock Hedonic coalitions: Optimality and stability.
	\newblock \emph{Econometrica}, 48\penalty0 (4):\penalty0 987--1003, 1980.

	\bibitem[Elkind and Wooldridge(2009)]{ElkiW09a}
	E.~Elkind and M.~Wooldridge.
	\newblock Hedonic coalition nets.
	\newblock In \emph{Proceedings of the 8th International Joint Conference on
	  Autonomous Agents and Multi-Agent Systems (AAMAS)}, pages 417--424, 2009.

	\bibitem[Gairing and Savani(2010)]{GaSa10a}
	M.~Gairing and R.~Savani.
	\newblock Computing stable outcomes in hedonic games.
	\newblock \emph{{Proceedings of the 3rd International Symposium on Algorithmic
	  Game Theory (SAGT)}}, 2010.

	\bibitem[Hajdukov{\'a}(2006)]{Hajd06a}
	J.~Hajdukov{\'a}.
	\newblock Coalition formation games: {A} survey.
	\newblock \emph{International Game Theory Review}, 8\penalty0 (4):\penalty0
	  613--641, 2006.

	\bibitem[Olsen(2009)]{Olsen09a}
	M.~Olsen.
	\newblock Nash stability in additively separable hedonic games and community
	  structures.
	\newblock \emph{Theory of Computing Systems}, 45\penalty0 (4):\penalty0
	  917--925, 2009.

	\bibitem[Roth and Sotomayor(1990)]{RoSo90a}
	A.~Roth and M.~A.~O. Sotomayor.
	\newblock \emph{Two-Sided Matching: {A} Study in Game Theoretic Modelling and
	  Analysis}.
	\newblock Cambridge University Press, 1990.

	\bibitem[Sandholm et~al.(1999)Sandholm, Larson, Andersson, Shehory, and
	  Tohm\'e]{SLA+99a}
	T.~Sandholm, K.~Larson, M.~Andersson, O.~Shehory, and F.~Tohm\'e.
	\newblock Coalition structure generation with worst case guarantees.
	\newblock \emph{Artificial Intelligence}, 111\penalty0 (1--2):\penalty0
	  209--238, 1999.

	\bibitem[Sung and Dimitrov(2007{\natexlab{a}})]{SuDi07a}
	S.~C. Sung and D.~Dimitrov.
	\newblock On core membership testing for hedonic coalition formation games.
	\newblock \emph{Operations Research Letters}, 35\penalty0 (2):\penalty0
	  155--158, 2007{\natexlab{a}}.

	\bibitem[Sung and Dimitrov(2007{\natexlab{b}})]{SuDi07b}
	S.~C. Sung and D.~Dimitrov.
	\newblock On myopic stability concepts for hedonic games.
	\newblock \emph{Theory and Decision}, 62\penalty0 (1):\penalty0 31--45,
	  2007{\natexlab{b}}.

	\bibitem[Sung and Dimitrov(2010)]{SuDi10a}
	S.~C. Sung and D.~Dimitrov.
	\newblock Computational complexity in additive hedonic games.
	\newblock \emph{European Journal of Operational Research}, 203\penalty0
	  (3):\penalty0 635--639, 2010.

	\bibitem[von Neumann and Morgenstern(1947)]{vNM47a}
	J.~von Neumann and O.~Morgenstern.
	\newblock \emph{Theory of Games and Economic Behavior}.
	\newblock Princeton University Press, 2nd edition, 1947.

	\end{thebibliography}
\end{document}